\documentclass[a4paper]{article}

\title{Discrete dividends after maturity  adjust the stock and strike prices}
\author{Kevin W.\ Lu\thanks{Research School of Finance, Actuarial Studies \& Statistics, Australian National University, Canberra,  ACT 0200, Australia.  Email: \href{mailto:kevin.lu@anu.edu.au}{kevin.lu@anu.edu.au}.}}

\usepackage{natbib}
\usepackage{graphicx}
\usepackage{amsmath,amsfonts,amssymb,amsthm}
\usepackage[colorlinks,citecolor=red,linkcolor=blue]{hyperref}
\usepackage{float}
\usepackage{mathrsfs}
\usepackage{fancyhdr} 
\usepackage{microtype} 
\usepackage{datetime2} 
\usepackage{siunitx}
\newcommand{\mytoday}{%
    \number\day\space
    \ifcase\month\or
    January\or February\or March\or April\or May\or June\or
    July\or August\or September\or October\or November\or December%
    \fi\space
    \number\year
}
\date{\mytoday}

\numberwithin{equation}{section} 
\allowdisplaybreaks[4] 
\let\originalleft\left
\let\originalright\right
\def\left#1{\mathopen{}\originalleft#1}
\def\right#1{\originalright#1\mathclose{}}

\fancypagestyle{plain}{
    \fancyhf{}
    
    \fancyfoot[C]{\thepage}
    }

\newtheorem{theorem}{Theorem}[section]
\newtheorem{corollary}[theorem]{Corollary}
\newtheorem{lemma}[theorem]{Lemma}
\newtheorem{proposition}[theorem]{Proposition}

\theoremstyle{definition}

\newtheorem{example}[theorem]{Example}
\theoremstyle{remark}
\newtheorem{remark}[theorem]{Remark}

\newcommand{\halmos}{\quad\hfill\mbox{$\Box$}}

\newcommand{\EE}{\mathbb{E}}
\newcommand{\FF}{\mathbb{F}}

\newcommand{\RR}{\mathbb{R}}

\newcommand{\PP}{\mathbb{P}}

\newcommand{\FFF}{{\cal F}}

\newcommand{\TTT}{{\cal T}}

\newcommand{\wt}{\widetilde}

\newcommand{\given}{{\,\vert\,}}

\newcommand{\wh}{\widehat}

\newcommand{\name}{\operatorname}

\begin{document}

    \maketitle
    
    \begin{abstract}
        
        The standard method to price European calls on a discrete dividend-paying stock is to  subtract the present value of the dividends from the initial stock price in the Black-Scholes formula. However, when there are dividends after maturity, this is inconsistent with the model. Within the escrowed dividend model, we highlight an extension of the Black-Scholes formula in which those  dividends after maturity adjust both the stock price and strike price, allowing for calls over all maturities to be priced in a model-consistent way. As a related application to American calls with a single dividend before maturity, we establish a neglected case where  it is always optimal to early exercise and derive an extension of the Roll-Geske-Whaley formula when there are dividends after maturity, including fully characterising the optimal exercise policy.

    \end{abstract}

    \section{Introduction}

    The price of a European call on a stock that pays discrete dividends of fixed amounts is said  to be obtained by subtracting the present value of the dividends from the initial stock price in  the Black-Scholes formula, so that
    \begin{align}
        c_0 = \overline{S}_0N(d_1) - Ke^{-rT}N(d_2),\label{bs}
    \end{align}
    where
    \begin{align*}
        d_1 = \frac{\log({\overline{S}_0}/K) +(r+\frac 12 \sigma^2)T}{\sigma\sqrt{T}},\quad d_2 = d_1 -\sigma\sqrt{T}, \quad  \overline{S}_0= S_0 -\sum_{i=1}^n D_ie^{-rt_i}
    \end{align*}
    with the usual notation in which only dividends before maturity are included. Popularised by \cite{hull} and its earlier editions, this standard approach is commonly taught and appears in many other textbooks including \cite{jarc24,Kwo08,das}, although not in all \cite{Bjo09,shreve}.  
    To justify this, the stock price is assumed to be a sum of a risky and riskless component. This is known as the escrowed dividend model and defined more precisely in Section \ref{sec2}. When further assuming that the risky component is  geometric Brownian motion (GBM), we call this the Black-Scholes escrowed dividend model. Then the standard Black-Scholes formula can be applied to price a call on the risky component, from which the price of a call on the original stock is obtained giving \eqref{bs}.

    We naturally take the escrowed dividend model as a single model defined with a fixed time horizon for  which derivatives of all earlier maturities  are priced consistently, rather than understanding it as simply the application of \eqref{bs} or having different models for the stock price for each maturity time as in  \cite{Fri02,HauHauLew03,hull}.
    It turns out that in the Black-Scholes escrowed dividend model, call prices also depend on the dividends paid after maturity, which adjust the stock  and strike prices. In these situations, using \eqref{bs} would be incorrect. The error may be reasonably large, such as for a deeply out-of-the-money call with a discrete dividend shortly after maturity, but it may be modest in many other cases.

    The purpose of this paper is to account for dividends after maturity in deriving an extension of the Black-Scholes formula  for the price of European calls, and the Roll-Geske-Whaley (RGW)  formula for the price of American calls with a single discrete dividend before maturity.  The method of proof is based on elementary change of variable arguments, although more involved in the latter situation. The extended Black-Scholes formula is given  in Theorem \ref{mainthm} and reduces to \eqref{bs} when there are no dividends  after maturity. We  show that standard no arbitrage results hold in Proposition \ref{narel}, including the European option bounds and put-call parity, and do {not} depend on dividends after maturity. But we also obtain improved call and put upper bounds, which do depend on dividends after maturity. Then we reveal a  case  neglected in the literature about the RGW formula, namely $D_1 \geq K$, where it is always optimal to exercise the call just prior to the dividend time. Using this, the extended RGW formula is given in  Theorem \ref{theorem2}, and we also fully characterise the optimal exercise policy. Lastly, we discuss how the extension of the Black-Scholes formula resolves erroneous claims in \cite{Hau07,Fri02}  that the RGW formula admits arbitrage.

    We place the escrowed dividend model and our results in the context of other approaches in the literature for dealing with discrete dividends. To begin with,  adjustments to stock and strike are considered in \cite{BosVan02} and to volatility in \cite{BosGaiShe03,BenVor02}, but unlike these methods and others \cite{BurGuo12,Fri02}, which are justified based on heuristic arguments and numerical experiments, we take as the starting point the escrowed dividend model, a logically consistent no arbitrage model from which all pricing formulas and other results are derived. Thus, we take a model-consistent approach and  avoid the ad hoc approach of some previous works.

    The escrowed dividend model is  equivalent to the models for discrete dividend-paying stocks in \cite{Bue10,Kla15} and \cite{BerMai15}, despite their apparently different  formulations, as  explained in Remark~\ref{equivlem}.  In \cite{BerMai15}, which is otherwise about a tree-based pricing method, it is noted in Remark 4 that analytical pricing formulas for European derivatives can be extended to the case where there are dividends before maturity, assuming there are none after maturity. In contrast, we obtain analytical pricing formulas in an equivalent model specifically when there are dividends after maturity.

    The main alternative model to the escrowed dividend model is the piecewise GBM model (also called the piecewise lognormal model \cite{VelNi06}), where the stock is a GBM that drops by the dividend amount. This model is surprisingly complicated because the dividend amount must be a function of the stock price, rather than a constant, in order for the stock price to be positive. \cite{HauHauLew03} studied the dividend policies required for this positivity and developed integral methods for option pricing, while other numerical methods for this model include analytical approximations with \cite{MatDilFer09} and without  \cite{DaiLyu09} arbitrary precision,  tree-based methods \cite{VelNi06}, and series methods  \cite{VeiWys09}.  Appearing as Model 3 in \cite{Fri02}, the piecewise GBM model is advocated for on the basis that it better reflects stock prices in reality,  but this is based on intuition rather than empirical evidence. Nevertheless, it is widely regarded in the literature as the benchmark or true model, and numerical results from other models and methods, including the escrowed dividend model, are  often compared to it \cite{DaiLyu09,HauHauLew03,VelNi06,VeiWys09}. Indeed, \cite{BurGuo12} summarises these comparisons, and \cite{BosVan02} considers an adjustment for the escrowed dividend model to approximately reproduce the call prices under the piecewise GBM model. However, there is no particular reason why the escrowed dividend model could  not be taken as the true model instead.   While dividends after maturity do not affect call prices under the piecewise GBM model, the escrowed dividend model has the advantage of analytical  formulas. Contrary to the preference of \cite{HauHauLew03}, we discuss further reasons to prefer the escrowed dividend model in Remarks \ref{rem1}, \ref{rem2} and \ref{pcfail}, including that it is the most natural way to treat discrete dividends.

    Lastly, another common modelling assumption from \cite{Bjo09,shreve} is to assume that dividends are a fixed proportion of the stock price at the dividend time, rather than  a fixed amount. In this case, the call price does not depend on  dividends after maturity. The model in \cite{Kla15,Bue10} allows for affine dividends, which generalises proportional and fixed  dividends.

    We now comment on our contributions. The extension of the Black-Scholes formula is not new. It appears in \cite{Kla15}, who writes that ``pricing of an option of any maturity will depend on all cash dividends, even those after maturity" and traces back to \cite{Bue10,BerBueFerJorLamOve07}. However, the result and its consequences appear  not to be  widely appreciated. Most of the standard no arbitrage results are of course well-known from textbook arbitrage arguments, though they are presented here to highlight the consistency of the model and because Proposition \ref{narel} (vi) is needed for the extended RGW formula.
    To our knowledge, the improved upper bounds and the  applications to the RGW formula, including the always early exercise case and the extended formula for dividends after maturity, are new contributions.   Additional comments about originality are given in Remarks \ref{ogrem1},  \ref{caserem} and \ref{ogrem2}.

    The paper is structured as follows. In Section \ref{sec2}, we introduce the escrowed dividend model,  and prove the extended Black-Scholes formula and  the standard no arbitrage results. In Section \ref{sec3}, we firstly begin with a lemma that fully characterises the optimal early exercise policy, secondly apply it to uncover an always early exercise case in the RGW formula, thirdly prove the extended RGW formula, and lastly comment on how claims that the RGW formula admits arbitrage are incorrect. In Section \ref{sec4}, we conclude with the benefits of the escrowed dividend model in providing a package of consistent analytical pricing formulas.

    \section{The escrowed dividend model}\label{sec2}
    
    For a time horizon $\overline{T}$ and  filtration  $\FF=(\FFF_t)_{t\in[0,\overline{T}]}$, we suppose the  continuous  time market model has 2 traded assets, the risk-free asset with constant risk-free rate $r$ and  a discrete dividend-paying stock with price process $S=(S_t)_{t\in[0,\overline{T}] }$ adapted to $\FF$. The stock pays dividends of fixed amounts $D_1,\dots,D_n$ at times $t_1,\dots, t_n\in(0,\overline{T}]$, and at each dividend time $t_i$, it is assumed that a portfolio receives the value $D_i$ per unit of stock held. For $s\leq t$ and $u\geq 0$, we introduce the notation
    \begin{align*}
        \overline{D}_{s,t\given u}  := \sum_{\substack{i=1\\t_i\in(s,t] }}^n D_i e^{-r(t_{i}-u)},
    \end{align*}
    which is the present or future value at time $u$ of the dividends paid at times $(s,t]$, and let $\overline{D}_t:= \overline{D}_{t,\overline{T}\given t} $. 
    
    Let $\PP$ be the real world measure and $\wh\PP$ be the risk-neutral measure.  We assume  that the stock price process under $\PP$ satisfies 
    \begin{align}
        S_t = \overline{S}_t + \overline{D}_t,\quad t\in[0,\overline{T}],\label{pp}
    \end{align}
    where $\overline S =(\overline S_t)_{t\in[0,\overline{T}]}$ is a stochastic process representing the risky component of the stock, and  $\overline D =(\overline D_t)_{t\in[0,\overline{T}]}$ is the deterministic process representing the riskless component associated with the dividends, and this is called the \emph{escrowed dividend model}. If it is further assumed  that $\overline{S}\sim \name{GBM}(\mu,\sigma)$ is a geometric Brownian motion with initial value $\overline{S}_0>0$, drift $\mu\in\RR$ and volatility $\sigma>0$ under $\PP$, and $\FF$ is the corresponding Brownian filtration, then it is called the \emph{Black-Scholes escrowed dividend model}.

    No arbitrage is ensured by the existence of an EMM, which under discrete dividends is defined as an  equivalent measure $\wh\PP$  such that the discounted value of a self-financing portfolio with 1 unit of stock that reinvests the dividends into the risk-free asset is a $\wh \PP$-martingale (see \cite[Rule 9.19]{EK19}). By the above definitions,  an equivalent measure $\wh\PP$ being an EMM is equivalent to  the discounted risky component  $\wt {\overline S}=(e^{-rt}{\overline S}_t)_{t\in[0,\overline{T}]}$ being a $\wh \PP$-martingale. Consequently by Girsanov's theorem, the stock price process under $\wh\PP$ satisfies \eqref{pp}, where ${\overline S}\sim\name{GBM}(r,\sigma)$.
    
    Let $c_t$, $p_t$ and $C_t$ be the prices at time $t$ of a European call, European put and American call with strike price $K>0$ and maturity time $T\in(0,\overline{T}]$, respectively.

    \begin{remark}\label{rem1}
         
        The escrowed dividend model appears to originate from  \cite{Rol77}. 
        As previously noted, a common interpretation of the   model is that  \eqref{pp}  holds up to   the maturity time $T$, rather than a fixed $\overline{T}$. This has been discussed and criticised in  \cite{HauHauLew03,Kla15,VelNi06} for leading to  logically inconsistent models when pricing derivatives with different maturities. Indeed, this would be a method not a model and is another reason why the piecewise GBM model is often taken as the true model. 
         
        But clearly by \eqref{pp}, the escrowed dividend model is defined up to $\overline{T}$, and accordingly the 4 criticisms in  \cite{HauHauLew03}, that it is logically inconsistent, can exhibit negative prices, is specific to GBM, and leads to arbitrage, do not apply. In particular, if $D_1,\dots,D_n\geq0$ but otherwise unrestricted, then by \eqref{pp}, $S$ is positive as a sum of positive and nonnegative processes, although it must be the case that
        \begin{align}
            S_0 =  \overline{S}_0 + \overline{D}_{0}, \quad  \overline{S}_0> 0.\label{initial}
        \end{align}
        Moreover, $\overline{S}$ can be specified by the $\wh\PP$-dynamics of any stock price process, not just GBM, for example, that of the Heston model. The claims of arbitrage are addressed in Section \ref{secarb}.   \halmos
    \end{remark}

    \begin{remark}\label{rem2}
        Note that the form  \eqref{pp} holds without loss of generality, while  the restriction that $\wt {\overline S}$ be a  $\wh \PP$-martingale is equivalent to the existence of an EMM to ensure no arbitrage. Thus,  consistent with a point \cite[pg 3--4]{Bue10} makes for an equivalent model (see Remark \ref{equivlem}), the escrowed dividend model should be understood as a consequence, rather than an assumption. Under no arbitrage, it follows that assuming that $\overline{S}$ is a GBM is the most  natural and simple extension of the Black-Scholes  model to the discrete dividend setting. Since the above argument showing  that $\wt {\overline S}$ is a $\wh \PP$-martingale is the same when $D_1,\dots, D_n$ are random, it follows that the  piecewise GBM model must also be representable in this form, but for some other more complicated  and less natural process $\overline{S}$, providing another reason for preferring the escrowed dividend model.    \halmos
    \end{remark}

    \subsection{Black-Scholes formula with dividends after maturity}\label{sec21}

    Many formulas will involve substitutions into other formulas, so to clarify notation, we will generally use lower case to denote generic function arguments, and upper case to denote the specific values that are substituted. Distinct from the  price of a call,  let  the standard Black-Scholes formula for a non-dividend-paying stock as a function of $s>0$, $k>0$, $t\in[0,T)$, $T\in(0,\overline {T}]$ be denoted
    \begin{align*}
        c^{\name{BS}}(s,k,t,T) := sN(d_1(s,k,t,T)) - ke^{-r(T-t)}N(d_2(s,k,t,T)) ,
    \end{align*}
    where
    \begin{align*}
        d_1(s,k,t,T) &:= \frac{\log(s/k) +(r+\frac 12 \sigma^2)(T-t)}{\sigma\sqrt{T-t}},\\
        d_2(s,k,t,T) &:= d_1(s,k,t,T) -\sigma\sqrt{T-t},
    \end{align*}
    $N(x)$ is the standard normal cdf at $x$, $r\in\RR$ is the risk-free rate and $\sigma>0$ is the volatility.  Also, for $k\in\RR$, define
    \begin{align}
        c^*(s,k,t,T):=    \begin{cases}
            c^{\name{BS}}(s,k,t,T) & \text{if $k> 0$,}\\
            s-ke^{-r(T-t)} & \text{if $k\leq 0$.}
        \end{cases} \label{exp}
    \end{align}

    We now give the extension of the Black-Scholes formula for a European call  when there are dividends after maturity in terms of the function  \eqref{exp}.

    \begin{theorem} \label{mainthm}
        In the Black-Scholes escrowed dividend model, the price at time $t\in[0,T)$ of a European call with strike price $K>0$ and maturity time $T\in(0,\overline{T}]$ is
        \begin{align}
            c_{t}=c^*(\overline{S}_t,K_T,t,T),\label{main}
        \end{align}
        where  $\overline{S}_t = S_t- \overline{D}_t$, $K_T := K - \overline{D}_{T}$.
       \end{theorem}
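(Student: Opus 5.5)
The plan is to price by risk-neutral expectation and then rewrite the payoff in terms of the risky component alone. Under the EMM $\wh\PP$ described above, $\overline S\sim\name{GBM}(r,\sigma)$, and the call price at $t<T$ is
\begin{align*}
c_t = e^{-r(T-t)}\,\wh\EE\bigl[(S_T-K)^+\given \FFF_t\bigr].
\end{align*}
The key observation is that at maturity the stock price still contains the escrowed value of the dividends paid after $T$. By \eqref{pp}, $S_T=\overline S_T+\overline D_T$, where $\overline D_T=\overline D_{T,\overline T\given T}$ is deterministic. Hence
\begin{align*}
(S_T-K)^+=(\overline S_T-(K-\overline D_T))^+=(\overline S_T-K_T)^+ .
\end{align*}
So the European call on $S$ is exactly a European call on the non-dividend-paying GBM $\overline S$ with the modified strike $K_T$. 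By \eqref{pp} at time $t$, the current value of the underlying is $\overline S_t=S_t-\overline D_t$. Dividends paid in $(t,T]$ enter only through this relation; they do not affect the payoff.

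Next I split into cases according to the sign of $K_T$.

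\textbf{Case $K_T>0$.} This is the standard Black-Scholes computation. Conditional on $\FFF_t$,
\begin{align*}
\overline S_T=\overline S_t\exp\bigl((r-\tfrac12\sigma^2)(T-t)+\sigma(W_T-W_t)\bigr),
\end{align*}
with $W_T-W_t$ independent of $\FFF_t$. The usual change of variables (or a change of numeraire to $\overline S$) gives $e^{-r(T-t)}\wh\EE[(\overline S_T-K_T)^+\given\FFF_t]=c^{\name{BS}}(\overline S_t,K_T,t,T)$. I would cite this as the standard formula rather than redo it, since $\overline S$ satisfies precisely the Black-Scholes assumptions.

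\textbf{Case $K_T\le 0$.} Here $\overline S_T>0\ge K_T$ almost surely, so the positive part can be dropped and $(\overline S_T-K_T)^+=\overline S_T-K_T$. The martingale property of $\wt{\overline S}$ under $\wh\PP$ gives
\begin{align*}
e^{-r(T-t)}\wh\EE[\overline S_T-K_T\given\FFF_t]=\overline S_t-K_Te^{-r(T-t)}.
\end{align*}
Together the two cases give $c_t=c^*(\overline S_t,K_T,t,T)$ as in \eqref{exp}.

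There is no serious obstacle here. The main points needing care are that the payoff must be written using $S_T=\overline S_T+\overline D_T$ with the post-maturity dividends retained, and that the degenerate case $K_T\le 0$ must be handled separately, since $\log(\overline S_t/K_T)$ is undefined there. It is also worth remarking that when no dividends fall in $(T,\overline T]$ we have $K_T=K$, and \eqref{main} reduces to \eqref{bs} at $t=0$, because then $\overline S_0=S_0-\overline D_0$ involves only the dividends before maturity.
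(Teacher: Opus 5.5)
Your proposal is correct and follows the paper's proof. You rewrite the payoff as $(\overline S_T-K_T)^+$ using \eqref{pp}, then evaluate it by the Black--Scholes formula when $K_T>0$ and by the forward price when $K_T\le 0$. The paper does the same thing, except that it bundles the two cases into the single generic identity \eqref{gen} and then substitutes $S'=\overline S$, $k=K_T$.
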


    \begin{proof}
        Writing the  Black-Scholes formula and  the price of a forward contract as a generic identity, we have for any $r\in\RR$, $\sigma>0$, $k\in\RR$, $0\leq t<T$, if $S'\sim \name{GBM}(r,\sigma)$ with initial value $S'_0>0$, then
        \begin{align}
            e^{-r(T-t)} \EE[(S'_T-k)^+\given\FFF_t ] =c^*(S'_t,k,t,T). \label{gen}
        \end{align}

        By the risk-neutral pricing formula, the European call price is
        \begin{align}
            c_t &=e^{-r(T-t)}\wh\EE[(S_T -K)^+\given\FFF_t] \label{line1} \\
            & = e^{-r(T-t)}\wh\EE[(\overline{S}_T -K_T )^+\given\FFF_t]  \label{line2} \\
            &=c^*(\overline{S}_t,K_T,t,T),\label{line3}
        \end{align}
        where  the expectation is evaluated by applying \eqref{gen} under $\wh\PP$  with the  substitutions $S'=\overline{S} \sim\name{GBM}(r,\sigma)$ and $k = K_T$.
    \end{proof}

    \begin{corollary}\label{cor}
        In the setting of Theorem \ref{mainthm}, the European call price at time 0 is
        \begin{align*}
            \hspace{-0.5em}  c_0 = \begin{cases}
                (S_0 -\overline{D}_{0} ) N(d_1) -( K-\overline{D}_{T})e^{-rT}N(d_2)  & \text{  \hspace{-1em}   if $K > \overline{D}_{T}$,}\\
                S_0 -\overline{D}_{0,T\given 0} -Ke^{-rT} & \text{  \hspace{-1em}   if $ K \leq \overline{D}_{T}$,}
            \end{cases}
        \end{align*}
        where
        \begin{align}
            d_1 = \frac{\log\left( \displaystyle{\frac{S_0 -\overline{D}_{0}}{ K-\overline{D}_{T}}}\right) +(r+\frac 12 \sigma^2)T}{\sigma\sqrt{T}},\quad d_2 = d_1 -\sigma\sqrt{T},\label{d1d2}
        \end{align}
        and if $T\in(t_n,\overline{T}]$, then the price reduces to \eqref{bs}.
    \end{corollary}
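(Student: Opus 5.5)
The corollary follows by evaluating \eqref{main} from Theorem \ref{mainthm} at $t=0$ and unpacking the function $c^*$ from \eqref{exp}. First, by \eqref{initial}, $\overline{S}_0 = S_0-\overline{D}_0$, and by definition $K_T = K-\overline{D}_T$. The split into cases is exactly the split in the definition of $c^*$ according to the sign of $k=K_T$.

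If $K>\overline{D}_T$, then $K_T>0$ and $c_0 = c^{\name{BS}}(S_0-\overline{D}_0, K-\overline{D}_T,0,T)$. Writing out $c^{\name{BS}}$ with $t=0$ gives the first line, and $d_1(\overline{S}_0,K_T,0,T)$, $d_2$ coincide with \eqref{d1d2}. If $K\le \overline{D}_T$, then $c_0 = \overline{S}_0 - K_Te^{-rT} = S_0-\overline{D}_0 - Ke^{-rT} + \overline{D}_T e^{-rT}$.

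The one computation to check is the identity
\begin{align*}
    \overline{D}_T e^{-rT} = \sum_{t_i\in(T,\overline{T}]} D_i e^{-r(t_i-T)}e^{-rT} = \overline{D}_{T,\overline{T}\given 0}.
\end{align*}
Combined with $\overline{D}_0 = \overline{D}_{0,T\given 0}+\overline{D}_{T,\overline{T}\given 0}$, which holds because $(0,\overline{T}]$ is the disjoint union of $(0,T]$ and $(T,\overline{T}]$, this gives $\overline{D}_0 - \overline{D}_T e^{-rT} = \overline{D}_{0,T\given 0}$. Hence the second case equals $S_0-\overline{D}_{0,T\given0}-Ke^{-rT}$.

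Finally, if $T\in(t_n,\overline{T}]$, no dividend times lie in $(T,\overline{T}]$. Then $\overline{D}_T=0$, so $K_T=K>0$ and the first case applies. Also $\overline{D}_0=\sum_{i=1}^n D_ie^{-rt_i}$ is exactly the sum of dividends before maturity, so the formula coincides with \eqref{bs}. There is no real obstacle; the only care needed is the bookkeeping of the discounting in $\overline{D}_{s,t\given u}$ in the degenerate case.
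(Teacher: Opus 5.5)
Your proposal is correct and matches the paper's approach. The paper gives no separate proof; the corollary is the specialisation of \eqref{main} to $t=0$, split into the two cases of \eqref{exp}. Your identity $\overline{D}_0-\overline{D}_Te^{-rT}=\overline{D}_{0,T\given 0}$ is exactly the bookkeeping needed for the second case.
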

    
    Recalling that $\overline{D}_{0}=\overline{D}_{0,\overline{T}\given 0}$, $\overline{D}_{T}=\overline{D}_{T,\overline{T}\given T}$, dividends before maturity adjust  the stock price, while dividends after maturity adjust both the stock and  strike prices and thereby affect the call price.

    \begin{remark}\label{ogrem1}
     Theorem \ref{mainthm} appears in \cite[Section 3.1]{Kla15} and \cite[Result 3.1]{Bue10}. The idea is also mentioned in \cite[pg 266]{VelNi06} without  formulas. Despite  claims  in \cite[pg 236]{MatDilFer09}, \cite[pg 37]{HauHauLew03} and  \cite[pg 39]{BurGuo12}, \eqref{bs} is the true call price when there are no dividends after maturity, not an approximation,  although it may be considered as an approximation to the piecewise GBM model in a loose sense.     In \cite[pg 157]{BosVan02}, the escrowed dividend model is criticised for failing to adjust the strike  price when there is a dividend just prior to maturity, since that is  effectively what the dividend should do. However, using \eqref{main},  the strike adjustment is precisely what happens. \halmos
    \end{remark}

    \begin{example}\label{eg1}
        Suppose the Black-Scholes escrowed dividend model holds for a stock with price $S_0 = 100$, volatility $\sigma = 0.3$, which pays dividends $D_1=3$ at $t_1=0.5$ and $D_2$ at $t_2=1.5$,  the risk-free rate is $r=0.06$, and the current time is $t=0$. Consider a European call with strike price $K = 130$ and maturity time  $T = 1$. By \eqref{initial}, $S$ is positive is equivalent to $D_2\in[0,(S_0-D_1e^{-rt_1})e^{rt_2} )=[0, 106.2319)$.

        Figure \ref{fig} shows the price of a European call $c_0$ as a function of the dividend amount after maturity $D_2$ computed by the correct formula \eqref{main} and incorrect formula  \eqref{bs}, where only the dividends before maturity are included. The corresponding  implied volatility is also shown, computed using the correct formula. We see that the error increases as $D_2$ gets larger.   \halmos    
        
        \begin{figure}[htpb]
            \begin{center}
                \includegraphics{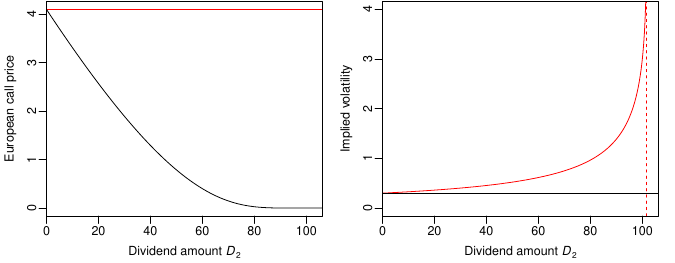}
                \caption{The European call price $c_0$ (left) and implied volatility (right) as a function of the dividend amount $D_2$ computed by the correct formula \eqref{main} (black) and  incorrect formula \eqref{bs} (red).}\label{fig}
            \end{center}
        \end{figure}
    \end{example}
    
    \begin{example}
        Consider the situation in Example \ref{eg1}  but with $T=0.1$, $t_1=0.2$, $D_2=0$ instead. This is an example of a deeply out-of-the-money call with a  dividend shortly after maturity. The correct call price is 0.0087 by  \eqref{main}, while the incorrect price is 0.0113 by  \eqref{bs}, which is  a 31\% overestimate. In contrast to Example \ref{eg1}, this is a reasonably large error  without relying on unrealistically large dividends. However, if the strike price were $K=100$, so that the call is at-the-money, then the prices are  3.9667 and 4.0786, respectively, which is only a 3\% overestimate.  \halmos
    \end{example}

    \subsection{Standard no arbitrage results with dividends after maturity}\label{sec22}
    
    While it is clear from \eqref{initial} that  dividends including after maturity may generally affect the call price, it is not universally true that they do. Specifically, many of the standard no arbitrage results, such as option bounds, remain valid and do not depend on dividends after maturity. Of course, no arbitrage results need to be model-free. This is in the sense that dividends are paid according to the escrowed dividend model, but no substantial modelling assumption is  placed on $\overline{S}$, so the following results assume the escrowed dividend model but not the Black-Scholes escrowed dividend model. Some results rely on $\overline{S}$ being nonnegative, but this is not a restrictive assumption. As noted in Remark \ref{rem1}, this ensures the stock price is positive, and any non-dividend model with nonnegative stock price extended to an escrowed dividend model would satisfy this.
    
    We use the convention that if there is indifference between exercising and not exercising, it is considered the latter.

    \begin{proposition} \label{narel} 
        Let the strike price be $K>0$  and  the maturity time be $T\in(0,\overline{T}]$. In the escrowed dividend model, at time $t\in[0,T]$, we have the following  results:
        \begin{enumerate}
            \item[(i)] For the European call, if  $\overline{S}$ is nonnegative and $K_T\leq 0$, then $c_t$ is given by \eqref{main}. Furthermore, if  $\overline{S}$ is positive and $K_T\leq 0$, then exercise at maturity has probability 1.

            \item[(ii)]  The forward price at time $t$ is
            \begin{align}
                F(t,T) = (S_t-\overline{D}_{t,T\given t})e^{r(T-t)}.\label{fp}
            \end{align}
            \item[(iii)] The put-call parity is
            \begin{align}
                c_t-p_t = S_t -\overline{D}_{t,{T}\given t} -Ke^{-r(T-t)}.
            \end{align}

            \item[(iv)] Assuming $S$ is nonnegative for the upper bound, the European call satisfies
            \begin{align}
                ( S_t -\overline{D}_{t,T\given t}-Ke^{-r(T-t)})^+\leq c_t\leq   S_t -\overline{D}_{t,T\given t}.\label{usualbound}
            \end{align}
            Furthermore, if $\overline{S}$ is nonnegative and $K_T>0$, then
            \begin{align}
                c_t\leq   S_t -\overline{D}_{t,\overline{T}\given t},\label{callbound}
            \end{align}
            and if $\overline{D}_{T,\overline{T}\given T}>0$, then this is a  tighter bound.
            
            \item[(v)] Assuming $S$ is nonnegative for the upper bound, the European put satisfies
            \begin{align*}
                ( Ke^{-r(T-t)} - S_t + \overline{D}_{t,T\given t})^+\leq p_t\leq  Ke^{-r(T-t)}.
            \end{align*}
            Furthermore, if  $\overline{S}$ is nonnegative and $K_T>0$, then
            \begin{align*}
                p_t \leq K_Te^{-r(T-t)}
            \end{align*}
            and if $\overline{D}_{T,\overline{T}\given T}>0$ then this is a  tighter bound.
            
            \item[(vi)] If $r>0$, the only times at which it may be optimal to exercise an American call are just prior to  a dividend time. 
            
            \item[(vii)] If $S_{t_i} = S_{t_i-}-D_i$ and $D_i\leq K(1-e^{-r(t'_{i+1}-t'_{i})})$, where $t_i<T$, $t'_{i}=t_i\wedge T$ and $t'_{n+1}=T$, then it is not optimal to early exercise an American call just prior to the dividend time $t_i$.
        \end{enumerate}
        
    \end{proposition}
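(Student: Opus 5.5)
The plan is to work entirely under an EMM $\wh\PP$, using only that $\wt{\overline S}$ is a $\wh\PP$-martingale, $S_T=\overline S_T+\overline D_T$, and $\overline D_t=\overline D_{t,T\given t}+e^{-r(T-t)}\overline D_T$. This last identity says that the escrowed dividend value at $t$ splits into the dividends paid in $(t,T]$ plus the discounted escrow remaining at $T$. Every item except (vi)--(vii) then reduces to computing $e^{-r(T-t)}\wh\EE[\,\cdot\given\FFF_t]$ of a payoff that is affine in $\overline S_T$, or bounding such a payoff pointwise.

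For (i), note that $K_T\le 0$ and $\overline S_T\ge0$ give $S_T-K=\overline S_T-K_T\ge0$, so the positive part can be dropped. Then $c_t=e^{-r(T-t)}\wh\EE[\overline S_T-K_T\given\FFF_t]=\overline S_t-K_Te^{-r(T-t)}$, which is \eqref{exp} in the case $k\le0$. If $\overline S>0$, the inequality is strict almost surely, so exercise occurs with probability one.

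For (ii), the forward price is the $F$ making the value of $S_T-F$ zero. Compute $e^{-r(T-t)}\wh\EE[S_T\given\FFF_t]=\overline S_t+e^{-r(T-t)}\overline D_T=S_t-\overline D_{t,T\given t}$ and rearrange to get \eqref{fp}. Item (iii) follows because $(x-K)^+-(K-x)^+=x-K$, so $c_t-p_t$ equals the same expectation minus $Ke^{-r(T-t)}$.

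For (iv) and (v), the lower bounds come from Jensen's inequality, or equivalently from (iii) together with $p_t,c_t\ge0$. The usual upper bounds use $(S_T-K)^+\le S_T$ when $S\ge0$, and $(K-S_T)^+\le K$. The improved bounds use $K_T>0$ and $\overline S_T\ge0$, which give
\[
(\overline S_T-K_T)^+\le \overline S_T,\qquad (K_T-\overline S_T)^+\le K_T.
\]
Taking expectations yields $c_t\le\overline S_t=S_t-\overline D_{t,\overline T\given t}$ and $p_t\le K_Te^{-r(T-t)}$. Tightness follows when $\overline D_{T,\overline T\given T}>0$, since then $\overline D_{t,\overline T\given t}>\overline D_{t,T\given t}$ and $K_T<K$.

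For (vi), take $s$ not immediately prior to a dividend time and a stopping time $\tau$ with $\tau=s$ on some event. I would compare exercising at $s$ with deferring to the next dividend time $t'$ (just before it) or to $T$. On $(s,t')$ no dividends are paid, so $S$ has the same discounted-martingale property as $\overline S$ there. By Jensen, the value of holding until $t'-$ is at least
\[
e^{-r(t'-s)}\wh\EE[(S_{t'-}-K)^+\given\FFF_s]\ge(S_s-Ke^{-r(t'-s)})^+.
\]
The right-hand side exceeds $S_s-K$ strictly when $r>0$, so exercise at $s$ is suboptimal, and ties are resolved by the stated convention.

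For (vii), exercising just prior to $t_i$ yields $S_{t_i-}-K$. Waiting instead gives at least the European value to $t'_{i+1}$ started from $S_{t_i}=S_{t_i-}-D_i$. By the lower bound in (iv), applied on the interval $(t_i,t'_{i+1}]$ where no dividend is paid before $t'_{i+1}-$, that value is at least $S_{t_i-}-D_i-Ke^{-r(t'_{i+1}-t'_i)}$. The hypothesis on $D_i$ makes this at least $S_{t_i-}-K$, so early exercise is not optimal by the convention.

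The main obstacle is making (vi) rigorous. The difficulty is the American call's optimal-stopping formulation with jumps in $S$: one must argue with a general stopping time modified on the event $\{\tau=s\}$ rather than a fixed $s$. I would handle this by replacing $\tau$ with $\tau'=\tau$ off $\{\tau\notin\{t_i-\}\}$ and otherwise pushing it to the next dividend time or to $T$. Conditional Jensen applied on each inter-dividend interval then shows the value weakly increases, strictly when $r>0$ on a non-null set.
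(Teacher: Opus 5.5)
Your proposal is correct and follows essentially the same route as the paper: risk-neutral pricing of payoffs that are affine in $\overline S_T$ for (i)--(iii), pointwise payoff bounds for (iv)--(v), and comparing the exercise value with the European call maturing just before the next dividend time (or at $T$) for (vi)--(vii). Your conditional Jensen step on an inter-dividend interval in (vi) is the same as the lower bound in \eqref{usualbound} that the paper invokes, and your stopping-time discussion adds rigor that the paper leaves implicit.
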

    
    \begin{proof}
        \emph{(i).}   In the escrowed dividend model with  $K_T\leq 0$, noting that the proof of Theorem \ref{mainthm} remains valid but with \eqref{line3} instead justified by the value of a forward contract, we obtain the first statement. Noting that it is optimal to exercise the call at time $T$ if and only if  $S_T-K=\overline{S}_T-K_T>0$, we obtain the second statement.

        \emph{(ii).}   By using the risk-neutral pricing formula, \eqref{pp}, and the fact that $\wt{ \overline{S}}$ is a $\wh\PP$-martingale, we have
        \begin{align}
            F(t,T)= \wh\EE[S_T\given\FFF_t] = e^{r(T-t)} \overline{S}_t +\overline D_{T,\overline{T}\given T}= (S_t-\overline{D}_{t,T\given t})e^{r(T-t)}.\label{fp2}
        \end{align}
        
        \emph{(iii).}   Applying the risk-neutral pricing formula to the identity $(S_T-K)^+-(K-S_T)^+  = (S_T-K)$ and using \eqref{fp2} gives the result.
        
        \emph{(iv)--(v).}   Applying the risk-neutral pricing formula to the inequality $(s-k)\leq (s-k)^+\leq s$ for $s=S_T$ and $k=K$ in \eqref{line1}, evaluating this using \eqref{fp2}, and noting $c_t\geq 0$ gives \eqref{usualbound}, while doing the same but for $s=\overline{S}_T\geq0$ and $k=K_T>0$ in \eqref{line2} gives \eqref{callbound}, which gives (iv). Similarly, (v)  is obtained.

        \emph{(vi).}  Let $c_t(T)$ be the price at time $t$ of a European call as a function of the maturity time $T$, and $c_t(T-)$ be its left-hand limit. Suppose that $t$ is not immediately  before  a dividend time and that there exists a next dividend time $t_i \in(t,T]$.   Then the continuation value of the American call at time $t$ is
        \begin{align*}
            C_t \geq c_t(t_i-)\geq S_t -Ke^{-r(t_i-t)}> S_t-K,
        \end{align*}
        by the lower bound in \eqref{usualbound} and  $r>0$. Thus, it is not optimal to early exercise the American call at time $t$. If there is no next dividend time such that $t_i \in(t,T]$, then the same argument applies with $t_i-$ and $t_i$ replaced by $T$.
        
        \emph{(vii)}   Now consider exercise immediately  before  dividend time $t_i$.  Suppose there exists another dividend time such that $t_{i+1} \in(t_i,T]$. Then the continuation value of the American option just prior to time  $t_i$ satisfies
        \begin{align}
            C_{t_i} \geq c_{t_i}(t_{i+1}-)\geq S_{t_i} -Ke^{-r(t_{i+1}-t_i)} = S_{t_i-}-D_i -Ke^{-r(t_{i+1}-t_i)},\label{neverexcond}
        \end{align}
        by the lower bound in  \eqref{usualbound} and $S_{t_i} = S_{t_i-}-D_i$. Now if  $D_i\leq K(1-e^{-r(t_{i+1}-t_{i})})$ holds, then algebraic manipulation of \eqref{neverexcond} yields $ C_{t_i} \geq S_{t_i-}-K$, so it is not optimal to early exercise just prior to time $t_i$.  If there is no  dividend time  such that $t_{i+1} \in(t_i,T]$, then the same argument applies with $t_{i+1}-$ and  $t_{i+1}$ replaced by $T$.
    \end{proof}

    \begin{remark}
        Proposition \ref{narel} (i) says that in the $K_T \leq 0$ case, the European call is effectively a forward. The put-call parity implies that in the Black-Scholes escrowed dividend model, the European put price at time 0 is
        \begin{align*}
            \hspace{-0.5em}  p_0 = \begin{cases}
                ( K-\overline{D}_{T}  ) e^{-rT}N(-d_2) -(S_0 -\overline{D}_{0})N(-d_1)  & \text{  \hspace{-1em}   if $K > \overline{D}_{T}$,}\\
                0& \text{  \hspace{-1em}   if $ K \leq \overline{D}_{T} $.}
            \end{cases}
        \end{align*}
        
        Surprisingly, the inclusion of positive dividends after maturity leads to tighter call and put upper bounds. The arbitrage strategy to exploit the failure of \eqref{callbound} is exactly the same as the strategy for the failure of the upper bound in \eqref{usualbound}.
        
        In  (vii), note firstly that $r\in\RR$ is permitted, secondly that $S_{t_i} = S_{t_i-}-D_i$ is an additional modelling assumption which may fail if the stock price has a random jump at time $t_i$, and thirdly that $t_i'$ is defined to account for both the cases where there is and is not another dividend time between $t_i$ and $T$.
        \halmos
    \end{remark}

    \begin{remark}\label{pcfail}
        Proposition \ref{narel} reassures us that in the escrowed dividend model, the  standard no arbitrage results for discrete dividends found in standard textbooks remain valid when there are dividends after maturity.    But these results do not hold  in this form for the piecewise GBM model, where instead they need to be expressed in terms of $\wh\EE[D_i\given\FFF_t]$, which depends on the stock price model, rather than $D_i$. For example, see \cite[Proposition 2]{HauHauLew03} for the put-call parity.  This difference, due to $D_i$ being random instead of  constant, is discussed in \cite{BurGuo12}. \halmos
    \end{remark}

    \begin{example}\label{eg1cont}
        Continuing on from Example \ref{eg1}, note that for $D_2 >  101.7512$,  the incorrect price violates the tighter call upper bound \eqref{callbound} and this bound coincides with $\sup_{\sigma\in(0,\infty)} c_0(\sigma)$, where $c_0(\sigma)$ is the price of the  European call as a function of the volatility $\sigma$, and hence it has no implied volatility  despite 
        also satisfying the usual upper bound  \eqref{usualbound}. This threshold is shown by the red vertical line in Figure \ref{fig}. \halmos
    \end{example}
    
    \begin{remark}\label{equivlem}
        In \cite{Bue10}, it is shown that any stock price paying discrete dividends must be of the form 
        \begin{align}
            S_t = (F(0,t)-\overline{D}_t)X_t + \overline{D}_t,  \quad t\in[0,\overline{T}], \label{bumodel}
        \end{align}
        for some $\wh\PP$-martingale $(X_t)_{t\in[0,\overline{T}]}$ with $\wh\EE[X_t]=1$, and this model is used in \cite{Kla15}. Since $F(0,t)-\overline{D}_t= \overline{S}_0e^{rt}$ by \eqref{fp}, and $X_t =\wt{ \overline{S}}_t/ \overline{S}_0$ by \eqref{pp}, which is without loss of generality, it follows that this is the same as the escrowed dividend model. Note that \cite{Bue10,Kla15} consider infinitely many dividends over an infinite time horizon $\overline{T}=\infty$. If the escrowed dividend model in Section \ref{sec2} were defined for $\overline{T}=\infty$, both models would still be the same. \cite{Bue10} also includes  further generalisations, such as  credit risk, which we do not address.

        In  \cite{BerMai15}, by starting with the Korn-Rogers model \cite{KR05} in which the stock price is the present value of all future expected dividends, it is shown that
        \begin{align}
            S_t = e^{-r(\overline{T}-t)}\wh\EE[S_{\overline{T}}\given \FFF_t ] +  \overline{D}_t, \quad t\in[0,\overline{T}]. \label{bmmodel}
        \end{align}
        By \eqref{fp2}, $e^{-r(\overline{T}-t)}\wh\EE[S_{\overline{T}}\given \FFF_t ]=\overline{S}_t$, so this model is also the same as the  escrowed dividend model, and \eqref{bumodel} and \eqref{bmmodel} are equal. Thus, the dynamics \eqref{bmmodel} hold without  assuming the Korn-Rogers model. Moreover, the assumption seems inconsistent with models of stocks that pay no dividend having nonzero prices.\\ \phantom{.} \halmos

    \end{remark}

    \section{Application to the Roll-Geske-Whaley formula} \label{sec3}
    
    Throughout, we assume the Black-Scholes escrowed dividend model holds. Consider an American call with strike price $K>0$ and maturity time $T\in(0,\overline{T}]$  on a stock with  a single discrete dividend before maturity, which means that $t_1<T$, and $T<t_2<\dots< t_n \leq \overline{T}$ if there are dividends after maturity.
    
    For $k_{t_1}, k_T\in\RR$, we introduce the notation
    \begin{gather*}
        d(k_{t_1}, k_T) :={}k_T-k_{t_1},\quad
        l(k_{t_1}, k_T):={}k_T(1-e^{-r(T-t_1)}), \quad
        u(k_{t_1}, k_T):={}k_T,
    \end{gather*} 
    and define the continuous function $ \lambda:= \lambda_{k_{t_1}, k_T } :(0,\infty)\to \RR$ by
    \begin{align*}
        \lambda_{k_{t_1}, k_T }  (s) =  s-k_{t_1} -c^*(s,k_T,t_1,T).
    \end{align*}
    Let $s^*:=s^*(k_{t_1}, k_T )$ denote the unique solution to $ \lambda_{k_{t_1}, k_T }(s)=0$ provided that it exists.
    
    Now for the specific values
    \begin{align*}
        K_{t_1}:=K-\overline{D}_{t_1-}, \quad K_T:=K-\overline{D}_{T}, 
    \end{align*}
    where it  should be noted that
    \begin{align}
        \overline{D}_{t_1-}= D_1+\overline{D}_Te^{-r(T-t_1)},\label{divrel}
    \end{align}
    we denote the above quantities  evaluated at $k_{t_1}=K_{t_1}$, $k_T=K_T$  as
    \begin{gather}
        D:=d(K_{t_1},K_T),\quad
        L:=l(K_{t_1},K_T),\quad
        U:=u(K_{t_1},K_T),\label{this}\\
        S^*:=s^*(K_{t_1},K_T),\quad
        \Lambda:=\lambda_{K_{t_1},K_T},\nonumber
    \end{gather}
    where $\Lambda$ is the difference between the exercise and continuation values just prior to time $t_1$. While $U=K_T$ always holds, it improves clarity to use separate variables depending on its role.

    Furthermore, assume  $r>0$. By Proposition \ref{narel} (vi), the only time at which it may be optimal to early exercise is just prior to $t_1$, so that the price of the American call at time 0 is 
    \begin{align}   
        C_0 &= \operatorname*{sup}_{\tau\in\TTT} \wh\EE[e^{-r\tau}{(S_\tau-K)^+}] \nonumber \\
        &= e^{-rt_1}  \wh\EE[({S}_{t_1-} -K)^+ \vee c_{t_1} ]\nonumber\\
        &=   e^{-rt_1}  \wh\EE[(\overline{S}_{t_1} -K_{t_1}) \vee c^*(\overline{S}_{t_1},K_T, t_1,T) ],\label{rnam}
      \end{align}
    where   $\TTT$ is the set of stopping times taking values in $[0,T]$ and $c_{t_1}$ is given by \eqref{main}.

    Accordingly, we have the trichotomy (N), (M), (A), where it is never, may be, is always optimal to early exercise just prior to time $t_1$, respectively. Any mention of early exercise will always refer to just prior to time $t_1$. The case (N) further has the dichotomy (NM), (NA), where it may be, is always optimal to exercise at time $T$, respectively. Recall that if there is indifference between early exercising and not early exercising, it is considered the latter, thereby maintaining the trichotomy. We also state these indifference cases below.

    \subsection{Early exercise lemma}

    We begin with  a generic result about the function $s\mapsto \lambda(s)$ which will be applied to characterise the early exercise policy.
    
    \begin{lemma}\label{philem} Let $t_1<T$ and  $k_{t_1},k_T\in\RR$.
        \begin{enumerate}
            \item[(i)]  If $k_T > 0$, then $\lambda$ is strictly increasing, and additionally:
            \begin{enumerate}
                \item [(a)] if $d\leq l$, then $\lambda<0$;
                \item [(b)] if $l<d<u$, then there exists a solution $s^*$ such that $\lambda(s)>0$ if and only if $s>s^*$;
                \item [(c)] if $d\geq u$, then $\lambda >0$.
            \end{enumerate}
            
            \item[(ii)]  If $k_T\leq 0$, then $\lambda= d-l$ is constant.
            
            \item[(iii)] \phantom{.}\vspace{-1em} 
            \begin{align*}
                ( s -k_{t_1} )\vee c^*(s,k_T, t_1,T) =\begin{cases}
                    s -k_{t_1} & \text{if $\lambda ( s) > 0$,}\\
                    c^*(s,k_T, t_1,T) & \text{if $\lambda ( s)\leq 0$.}
                \end{cases}
            \end{align*}
               \end{enumerate}
    \end{lemma}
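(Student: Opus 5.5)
Throughout, write $\tau:=T-t_1>0$, so that $d=k_T-k_{t_1}$, $l=k_T(1-e^{-r\tau})$ and $u=k_T$. The plan is to study $\lambda(s)=s-k_{t_1}-c^*(s,k_T,t_1,T)$ through its derivative and its limits at $0$ and $\infty$.

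\emph{Case (i), $k_T>0$.} Here $c^*=c^{\name{BS}}$. The standard delta identity gives $\partial_s c^{\name{BS}}(s,k_T,t_1,T)=N(d_1)\in(0,1)$. Hence $\lambda'(s)=1-N(d_1)=N(-d_1)>0$, and $\lambda$ is strictly increasing. Next I compute the boundary limits.
\begin{itemize}
\item As $s\downarrow0$, $c^{\name{BS}}\to0$, so $\lambda(0+)=-k_{t_1}=d-u$.
\item As $s\to\infty$, $c^{\name{BS}}(s)-(s-k_Te^{-r\tau})\to0$, since $N(d_1),N(d_2)\to1$ and $s(1-N(d_1))\to0$ by Gaussian tail decay. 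So $\lambda(\infty)=k_Te^{-r\tau}-k_{t_1}=d-l$.
\end{itemize}
Because $\lambda$ is strictly increasing, the range of $\lambda$ is exactly the open interval $(d-u,\,d-l)$. Note $l<u$ because $k_T>0$ and $r>0$; this standing assumption $r>0$ is needed so that the three subcases do not overlap. The three subcases then follow.
\begin{enumerate}
\item[(a)] If $d\le l$, then $\lambda<d-l\le0$.
\item[(c)] If $d\ge u$, then $\lambda>d-u\ge0$.
\item[(b)] If $l<d<u$, the range contains $0$. The intermediate value theorem and continuity give a root $s^*$, unique by strict monotonicity, and $\lambda(s)>0$ if and only if $s>s^*$.
\end{enumerate}

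\emph{Case (ii), $k_T\le0$.} Here $c^*=s-k_Te^{-r\tau}$ by \eqref{exp}, so
\[
\lambda(s)=k_Te^{-r\tau}-k_{t_1}=(k_T-k_{t_1})-k_T(1-e^{-r\tau})=d-l,
\]
which is constant.

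\emph{Part (iii).} This is immediate from the definition of the maximum. $\lambda(s)>0$ means $s-k_{t_1}>c^*$, so the maximum is $s-k_{t_1}$. Otherwise $s-k_{t_1}\le c^*$, and the maximum equals $c^*$, which is consistent with the indifference convention.

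\emph{Main obstacle.} The only delicate step is justifying the limit of $c^{\name{BS}}$ as $s\to\infty$, and with it the sharp (open) range. I would handle it by using the bounds $(s-k_Te^{-r\tau})^+\le c^{\name{BS}}\le s$ together with the explicit decay of $N(-d_1)$.
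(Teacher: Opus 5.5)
Your proposal is correct and follows essentially the same route as the paper: $\lambda'(s)=1-N(d_1)>0$, the limits $\lambda(0+)=-k_{t_1}=d-u$ and $\lambda(\infty)=d-l$ via the vanishing gap between $c^{\mathrm{BS}}$ and $s-k_Te^{-r(T-t_1)}$, then (a)--(c) read off from these, with (ii) and (iii) immediate from the definitions. One small correction: for $k_T>0$ the inequality $l<u$ is equivalent to $e^{-r(T-t_1)}>0$ and so holds for every real $r$; thus $r>0$ is not needed to keep the subcases from overlapping (it is needed only for $0<l$, which the paper uses later).
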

    
    \begin{proof}
         \emph{(i).}    Consider the case $k_T>0$. Then $\lambda'(s)=1-N(d_1(s,k_T,t_1,T))>0$, and
        \begin{align*}
            \underline{\lambda}:=  \lim_{s\to 0^+} \lambda(s) = -k_{t_1},\quad  \overline{\lambda}:= \lim_{s\to\infty}\lambda(s) = d-l,
        \end{align*}
        where the latter is obtained by noting that
        \begin{align*}
            \lambda(s) = k_Te^{-r(T-t_1)}-k_{t_1} - (c^{\name{ BS}}(s,k_T,t_1,T)-(s - k_Te^{-r(T-t_1)}))
        \end{align*}
        and using the fact that the difference between  the standard Black-Scholes formula and the standard call lower bound vanishes as $s\to\infty$.

        Now $d\leq l$ is equivalent to $\overline{\lambda}\leq 0$, which proves (a). Next,  $d<u$ is equivalent to $-k_{t_1}<0$, which is equivalent to $\underline{\lambda}<0$, while $l<d$ is equivalent to $\overline{\lambda}>0$, which proves (b). Lastly, $d\geq u$ is equivalent to $-k_{t_1}\geq 0$, which is equivalent to $\underline{\lambda}\geq0$, which proves (c).
        
        \emph{(ii).}   Now consider the case $k_T\leq 0$.  The result  follows from the definitions of the functions. 
        
        \emph{(iii).} The result follows by noting that $\lambda ( s)<0$  is equivalent to $ s -k_{t_1} < c^*(s,k_T, t_1,T)$, and that this also holds with $<$ replaced by $=$ and $>$.
    \end{proof}

    We now apply Lemma \ref{philem} to fully characterise the optimal exercise policy of the American call in the Black-Scholes escrowed dividend model.
    
    \begin{proposition}\label{exresult}
        Suppose the Black-Scholes escrowed dividend model with $r>0$ holds and there is  a single dividend before maturity.
         
        \begin{enumerate}
                         \item[(i)] The following optimal exercise policy holds:
                \begin{enumerate}
                    \item[(N)] If $D\leq L$, then it is never optimal to early exercise. In addition,
                    \begin{enumerate}
                          \item[(NM)]   if $K_T>0$, then   exercise at maturity has probability strictly less than 1;
                        \item[(NA)] if $K_T\leq 0$, then exercise at  maturity has probability 1.
                    \end{enumerate}
                    \item[(M)] If $L <D< U$, then it is optimal to early exercise if and only if  $\overline{S}_{t_1}>S^*$, where $S^*$ exists.
                    \item[(A)] If $K_T>0$ and  $D \geq U$, or $K_T\leq 0$ and $D> L$, then it is always optimal to early exercise.
                \end{enumerate}
                
                However,  $K_T\leq 0$ and $D=L$, or $L <D< U$ and $\overline{S}_{t_1}=S^*$ if and only if there is indifference between early exercising and not early exercising.
                
                \item[(ii)] The above conditions for the cases (N), (M), (A) are mutually exclusive and exhaustive.
                
            \end{enumerate}
        \end{proposition}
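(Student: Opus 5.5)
The proof applies Lemma \ref{philem} with the specific values $k_{t_1}=K_{t_1}$ and $k_T=K_T$, so that $d,l,u,s^*,\lambda$ become $D,L,U,S^*,\Lambda$. We then read off the optimal policy from the representation \eqref{rnam}. By Proposition \ref{narel} (vi) the only candidate early exercise time is just prior to $t_1$. There the exercise value is $S_{t_1-}-K=\overline{S}_{t_1}-K_{t_1}$, using $S_{t_1-}=\overline{S}_{t_1}+\overline{D}_{t_1-}$ and \eqref{divrel}. The continuation value is $c_{t_1}=c^*(\overline{S}_{t_1},K_T,t_1,T)$ by Theorem \ref{mainthm}. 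Hence early exercise is strictly optimal exactly when $\Lambda(\overline{S}_{t_1})>0$, and indifference occurs exactly when $\Lambda(\overline{S}_{t_1})=0$; this is Lemma \ref{philem} (iii). Everything therefore reduces to the sign of $\Lambda$ on $(0,\infty)$, which is the range of $\overline{S}_{t_1}$ since $\overline{S}$ is a GBM. Because $\overline{S}_{t_1}$ is lognormal with full support, statements for every $s$ translate into almost-sure statements.

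\emph{Case $K_T>0$.} Lemma \ref{philem} (i) applies directly.
\begin{itemize}
\item[(a)] If $D\le L$ then $\Lambda<0$, so early exercise is never optimal and there is no indifference. For (NM), exercise at $T$ happens iff $\overline{S}_T>K_T>0$. Given $\overline{S}_{t_1}$, this event has conditional probability strictly less than one because the lognormal law has full support, so its probability is strictly less than 1.
\item[(b)] If $L<D<U$ then $S^*$ exists and $\Lambda(s)>0$ iff $s>S^*$. By strict monotonicity, $\Lambda(s)=0$ iff $s=S^*$, which gives (M) together with its indifference clause.
\item[(c)] If $D\ge U$ then $\Lambda>0$, which gives (A).
\end{itemize}

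\emph{Case $K_T\le 0$.} By Lemma \ref{philem} (ii), $\Lambda\equiv D-L$ is constant.
\begin{itemize}
\item If $D>L$, early exercise is always optimal, which is (A).
\item If $D<L$, it is never optimal. Exercise at $T$ then has probability 1 by Proposition \ref{narel} (i), since $\overline{S}$ is positive; this is (NA).
\item If $D=L$, there is indifference for every $s$. By our convention this is classified as not exercising, so it falls under (N), and it is the remaining indifference case.
\end{itemize}
The converse direction of the indifference statement follows because in all other configurations $\Lambda(\overline{S}_{t_1})\ne0$.

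For (ii), we show the conditions partition the parameter space.
\begin{itemize}
\item \emph{$K_T>0$.} Here $U=K_T>0$ and $L=K_T(1-e^{-r(T-t_1)})\in(0,U)$ since $r>0$. The three intervals $D\le L$, $L<D<U$ and $D\ge U$ are then disjoint and exhaustive.
\item \emph{$K_T\le0$.} Here $U\le L$, so the (M) condition $L<D<U$ is empty. The remaining split $D\le L$ versus $D>L$ is exhaustive and exclusive.
\end{itemize}
This is the one point requiring care: the ordering $L<U$ when $K_T>0$ and $U\le L$ when $K_T\le0$, both resting on $r>0$ and $T>t_1$. Without it, the (M) and (A) conditions as stated could overlap. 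Otherwise the proof is bookkeeping, with the only delicate point being the careful matching of the equality/indifference boundary cases with the stated convention.
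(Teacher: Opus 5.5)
Your proposal is correct and follows the paper's proof essentially step for step. You substitute $s=\overline{S}_{t_1}$, $k_{t_1}=K_{t_1}$, $k_T=K_T$ into Lemma \ref{philem}, read off the policy from the sign of $\Lambda$ (with indifference iff $\Lambda(\overline{S}_{t_1})=0$), and get (ii) from $L<U$ when $K_T>0$ versus $U\le L$ when $K_T\le 0$; your separate handling of the $K_T\le 0$, $D=L$ boundary, where $\Lambda\equiv 0$ rather than $\Lambda<0$, is slightly more careful than the paper's wording.
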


        \begin{proof}

            \emph{(i).}      We apply   Lemma \ref{philem} with the  substitutions $s=  \overline{S}_{t_1}$, $k_{t_1} = K_{t_1}$, $k_{T} = K_T$. Thus, if $D\leq L$, then 
            \begin{align*}
                \Lambda(   \overline{S}_{t_1} )<0 \quad \Leftrightarrow \quad
                \overline{S}_{t_1} -K_{t_1} < c^*(\overline{S}_{t_1},K_T, t_1,T) \quad \Leftrightarrow \quad
                {S}_{t_1-} -K < c_{t_1}
            \end{align*}
            almost surely, which proves (N).  Noting that it is optimal to exercise at time $T$ if and only if  $S_T-K=\overline{S}_T-K_T>0$, we also obtain (NM) and (NA).   Similarly, if $L<D<U$, then $K_T>0$, and then
            \begin{align*}
                \overline{S}_{t_1}>S^* \quad \Leftrightarrow \quad
                \overline{S}_{t_1} -K_{t_1} > c^*(\overline{S}_{t_1},K_T, t_1,T) \quad \Leftrightarrow \quad
                {S}_{t_1-} -K> c_{t_1},
            \end{align*}
            almost surely, which proves (M). The case (A) follows similarly as the proof of (N).
            
            Lastly, there is indifference if and only if  $\Lambda(\overline {S}_{t_1})=0$. When 
            $K_T>0$, this is equivalent to  $L <D< U$ and $\overline{S}_{t_1}=S^*$, while when $K_T\leq 0$, it is equivalent to $D=L$.

            \emph{(ii).} If  $K_T>0$, we have  $0< L< U$, giving the partition   $D\leq L$, $L<D<U$, $D\geq U$, while
            if  $K_T\le0$, we have the partition $D\le L$, $D>L$. It then follows from $L<D<U$ implying $K_T> 0$ that the 3 cases are mutually exclusive and exhaustive.
        \end{proof}

        Unfortunately, Proposition \ref{exresult} is almost but not quite model-free since it relies on  properties of the Black-Scholes formula, such as in the proof of Lemma \ref{philem} (i), but it should be expected that those properties  hold for other models as well.

        By \eqref{divrel}, $D_1 \leq K(1-e^{-r(T-t_1)})$  is equivalent to  $D\leq L$, so the condition for when it is never optimal to early exercise from   Proposition  \ref{narel} (vii) and  Proposition \ref{exresult}  are the same. Note however that if $K_T\leq 0 $ and $D= L$, there is  indifference between early exercising and not. Also,    Proposition  \ref{narel} (vii) is more general as it allows for multiple dividends before maturity.

        \subsection{The always early exercise case} 
        
        In this section, we restrict to the classical case where $n=1$, so there are no dividends after maturity. Then     $\overline{S}_0 =S_0 - D_1e^{-rt_1}$ and the well-known \emph{Roll-Geske-Whaley formula} is 
        \begin{align}
            C_0 =C^{\name{RGW}}(\overline{S}_0,K_{t_1},K_T), \label{rgw}
        \end{align}
        where
        \begin{align*}
            C^{\name{RGW}}(s,k_{t_1} ,k_{T})     :={}& s N(b_1) + s  M\left(a_1,-b_1, -\sqrt{\frac{t_1}{T}}\right)  \\
            &   - k_T e^{-rT} M \left(a_2,-b_2, -\sqrt{\frac{t_1}{T}}\right)   - k_{t_1} e^{-rt_1} N(b_2),   \\
            a_i  &:= d_i(s,k_T,0,T),\\
            b_i &:= d_i(s,s^*(k_{t_1},k_T),0,t_1),\quad i=1,2,
        \end{align*}
        and $M(x,y,\rho)$ is the standard bivariate normal cdf with correlation $\rho$ at $(x,y)$, and this holds provided that $S^*$ exists \cite{Rol77,Ges79,Wha81} (see \cite[Section 9.4]{Hau07} for the  standard presentation). We choose to write the formula emphasising substitutions because this  is the canonical form that generalises to the case where there are dividends after maturity. Here, 
        \begin{align}
            K_{t_1}=K-D_1, \quad  K_T=K>0,\label{kn1}
        \end{align}
        so all the terms in the formula are well-defined and finite.
        
        It is well-known that if $D_1\leq K(1-e^{-r(T-t_1)})$, then  $S^*$ does not exist and it is never optimal to early exercise (see  \cite[pg 252]{Rol77}, \cite[Technical Note 4]{hull}, \cite[Section 9.4]{Hau07}, or  \cite[Section 5.1.5]{Kwo08}). Here, we show that there is another case, namely $D_1 \geq  K$, in which $S^*$ does not exist, and this corresponds to always early exercising, and in particular, $D_1> K(1-e^{-r(T-t_1)})$ does not guarantee the existence of $S^*$.

        \begin{proposition}\label{rgwprop}
            Suppose the Black-Scholes escrowed dividend model with $r>0$ holds, $n=1$, and there is  a single dividend before maturity. If $D_1 \geq K$, then a solution $S^*$ does not exist, it is always optimal to early exercise,  and the price of the American call at time 0 is
            \begin{align}
                C_0 =   S_0 -Ke^{-rt_1}.\label{finalcase}
            \end{align}

        \end{proposition}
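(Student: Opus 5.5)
With $n=1$ we have $\overline{D}_T=0$, so by \eqref{kn1} $K_T=K>0$ and $K_{t_1}=K-D_1$. Hence $D=K_T-K_{t_1}=D_1$ and $U=K_T=K$. The hypothesis $D_1\geq K$ is therefore exactly $D\geq U$ with $K_T>0$. The plan is to reduce the statement to Lemma \ref{philem} (i)(c) and Proposition \ref{exresult}, and then evaluate \eqref{rnam} directly.

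\emph{Step 1: nonexistence of $S^*$.} Since $k_T=K>0$ and $d\geq u$, Lemma \ref{philem} (i)(c) gives $\Lambda(s)>0$ for all $s>0$. Thus $\Lambda$ has no zero and $S^*$ does not exist. The underlying reason is that $\Lambda$ is strictly increasing with $\lim_{s\to0^+}\Lambda(s)=-K_{t_1}=D_1-K\geq 0$. It may be worth remarking that in the boundary case $D_1=K$ the infimum of $\Lambda$ equals $0$ but is not attained, so strict positivity still holds on $(0,\infty)$.

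\emph{Step 2: always optimal to early exercise.} This is case (A) of Proposition \ref{exresult} (i). Explicitly, $\Lambda(\overline{S}_{t_1})>0$ holds almost surely, since $\overline{S}_{t_1}>0$ under GBM. Equivalently, $S_{t_1-}-K>c_{t_1}$, so exercising just prior to $t_1$ strictly beats continuing. In particular there is no indifference, because the indifference cases listed in Proposition \ref{exresult} require $L<D<U$ or $K_T\leq 0$.

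\emph{Step 3: the price.} Apply Lemma \ref{philem} (iii) inside \eqref{rnam}. The maximum equals $\overline{S}_{t_1}-K_{t_1}$ almost surely, so
\begin{align*}
C_0=e^{-rt_1}\wh\EE[\overline{S}_{t_1}-K_{t_1}]=\overline{S}_0-(K-D_1)e^{-rt_1},
\end{align*}
using that $e^{-rt}\overline{S}_t$ is a $\wh\PP$-martingale. Substituting $\overline{S}_0=S_0-D_1e^{-rt_1}$ gives $C_0=S_0-Ke^{-rt_1}$, which is \eqref{finalcase}.

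The only delicate point is the boundary $D_1=K$. There $K_{t_1}=0$, and one must check that Lemma \ref{philem} (i)(c) still gives strict positivity rather than merely $\Lambda\geq0$; otherwise the convention on indifference would classify early exercise as not optimal. This follows from the strict monotonicity $\Lambda'(s)=1-N(d_1)>0$ together with the limit at $0$ being $0$. Everything else is direct substitution.
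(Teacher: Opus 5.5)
Your proposal is correct and follows the paper's proof: identify $D_1\geq K$ with $D\geq U$ via \eqref{kn1}, then invoke Proposition \ref{exresult} (A) and Lemma \ref{philem} (i)(c), (iii). This gives nonexistence of $S^*$ and reduces \eqref{rnam} to $e^{-rt_1}\wh\EE[\overline{S}_{t_1}-K_{t_1}]$. Your explicit check of the boundary case $D_1=K$, using strict monotonicity of $\Lambda$ and $\lim_{s\to0^+}\Lambda(s)=0$, spells out what the paper leaves implicit in the proof of Lemma \ref{philem} (i)(c).
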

        
        \begin{proof}
            Observing that \eqref{kn1} holds, $D_1\geq K$ is equivalent to $D\geq U$, and this implies by Proposition \ref{exresult} that it is always optimal to early exercise.

            Next, Lemma \ref{philem} (i)(c) and (iii) with the substitutions $s=  \overline{S}_{t_1}$, $k_{t_1} = K_{t_1}$, $k_{T} = K_T$ imply that $S^*$ does not exist and the American call price given by  \eqref{rnam} reduces to $ C_0 = e^{-rt_1}\wh\EE[\overline{S}_{t_1}-K_{t_1}]$, which gives \eqref{finalcase}.
        \end{proof}

        \begin{remark}\label{caserem}
            While \cite[pg 253--254]{Rol77}  discusses how it is always optimal to exercise in an asymptotic sense if the call is extremely deep in the money, our $D_1\geq  K$ condition is different and not asymptotic. In particular, it is  independent of $S_{t_1-}$ and depends only on the model parameters. We are not aware of it appearing in the literature.
            
            The code implementing the RGW formula given in \cite[Section 9.4]{Hau07} amazingly still produces the correct price despite neglecting this case. This is a consequence of the implementation of the numerical method for solving $S^*$, which uses a bisection method applied to the strictly negative function $-\Lambda(s)$ over the interval $s\in[0,M]$ for some large right endpoint $M>0$. As $-\Lambda(0)$ is always closer to 0 than $-\Lambda(M)$,  in this case the method always converges to $S^*=0$ (which is not a solution of $\Lambda(s)=0$ as there is no solution). When substituted into the RGW formula, this corresponds to always early exercising, hence producing the correct price. Other valid implementations may not achieve this. In contrast, the Matlab implementation given by the function \verb|optstockbyrgw| produces an error stating that $S^*$ cannot be found. \halmos
        \end{remark}
        
        Summarising what is already known in conjunction with Proposition \ref{rgwprop}, we have that if   $D_1\leq K(1-e^{-r(T-t_1)})$, then $S^*$ does not exist, it is never optimal to early exercise; if   $K(1-e^{-r(T-t_1)})< D_1< K$, then $S^*$ exists, it is optimal to early exercise if and only if $S_{t_1-} > S^*+D_1$; if   $D_1 \geq K$, then $S^*$ does not exist, it is always optimal to early exercise. In all cases, by \eqref{main}, \eqref{rgw}, \eqref{finalcase}, the price of the American call at time 0 is 
        \begin{align}
            C_0 = C^*(\overline{S}_0,K_{t_1},K_T), \label{ggbasic}
        \end{align}
        where 
        \begin{align}\label{gg}
            C^*(s,k_{t_1},k_{T}) :=   \begin{cases}
                c^*(s,k_{T},0,T) & \text{if  $d \leq l$} , \\
                C^{\name{RGW}}( s,k_{t_1}, k_{T})  & \text{if  $l<  d< u$}, \\
                s -k_{t_1}e^{-rt_1} &\text{if $k_T>0$, $d \geq u$, or   $k_T\leq 0$, $d > l$}.
            \end{cases}
        \end{align}
        Here,
        \begin{align}
            D = D_1, \quad  L = K(1-e^{-r(T-t_1)}) , \quad U = K,\label{notthis}
        \end{align}
        and note that $K_T>0$.

        \subsection{RGW formula with dividends after maturity}
        
        We drop the assumption that $n=1$, and  return to the case where there may be dividends after maturity.  We now derive an extension of the RGW formula  by using a change of variables technique similar to the proof of Theorem \ref{mainthm} in terms of the function \eqref{gg}, where it should be noted that upon substituting in $k_{t_1}=K_{t_1}$, $k_T=K_T$, we do not necessarily get \eqref{notthis}, but rather we use \eqref{this}.

        \begin{theorem}\label{theorem2}
            In the Black-Scholes escrowed dividend model with $r>0$ and a single dividend before maturity, the price at time 0 of an American call  with strike price $K>0$ and maturity time $T\in(0,\overline{T}]$ is
            \begin{align}
                C_0 = C^*(\overline{S}_0,K_{t_1},K_T ),\label{extrgw}
            \end{align}
            where $\overline{S}_0 = S_0 - \overline{D}_{0}$, $K_{t_1}=K-\overline{D}_{t_1-}$ and $K_T=K-\overline{D}_{T}$.  Furthermore, the optimal exercise policy is given by Proposition \ref{exresult}.
        \end{theorem}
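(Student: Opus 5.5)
The plan starts from the representation \eqref{rnam}:
\begin{align*}
C_0=e^{-rt_1}\wh\EE\big[(\overline{S}_{t_1}-K_{t_1})\vee c^*(\overline{S}_{t_1},K_T,t_1,T)\big],
\end{align*}
where $\overline{S}\sim\name{GBM}(r,\sigma)$ under $\wh\PP$ with initial value $\overline{S}_0=S_0-\overline{D}_0$. The key observation is that this expectation depends on the model only through the law of $\overline{S}$ and the two numbers $K_{t_1},K_T$. Write $G(s,k_{t_1},k_T)$ for the same expression with a generic GBM$(r,\sigma)$ process $S'$ started at $s$ and generic $k_{t_1},k_T$. Then $C_0=G(\overline{S}_0,K_{t_1},K_T)$, so it suffices to prove the generic identity $G(s,k_{t_1},k_T)=C^*(s,k_{t_1},k_T)$ for all $s>0$ and all admissible $k_{t_1},k_T\in\RR$. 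This is the analogue of \eqref{gen} in the proof of Theorem \ref{mainthm}. The identity is purely a statement about GBM, so the dividend structure plays no further role once the substitution $s=\overline{S}_0$, $k_{t_1}=K_{t_1}$, $k_T=K_T$ is made.

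To evaluate $G$, split into the three regimes of \eqref{gg} using Lemma \ref{philem}, applied with $s=S'_{t_1}$.

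\emph{Case $d\le l$.} By Lemma \ref{philem} (i)(a) or (ii), $\lambda\le 0$, so by (iii) the maximum equals $c^*(S'_{t_1},k_T,t_1,T)$ almost surely. By the tower property and \eqref{gen}, $G=c^*(s,k_T,0,T)$.

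\emph{Case $k_T>0$, $d\ge u$, or $k_T\le0$, $d>l$.} Here $\lambda>0$ by (i)(c) or (ii), so the maximum is $S'_{t_1}-k_{t_1}$ and $G=s-k_{t_1}e^{-rt_1}$ by the martingale property of $e^{-rt}S'_t$.

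\emph{Case $l<d<u$.} Here $k_T>0$ and $k_{t_1}>0$ (since $d<u$ means $k_{t_1}>0$), and $s^*$ exists by (i)(b). The expectation splits as
\begin{align*}
e^{-rt_1}\wh\EE\big[(S'_{t_1}-k_{t_1})\mathbf 1_{\{S'_{t_1}>s^*\}}\big]+e^{-rt_1}\wh\EE\big[c^*(S'_{t_1},k_T,t_1,T)\mathbf 1_{\{S'_{t_1}\le s^*\}}\big].
\end{align*}
This is exactly the computation underlying the classical RGW formula \cite{Rol77,Ges79,Wha81}. That derivation uses only that the underlying is a GBM$(r,\sigma)$ started at $s$, the strikes $k_{t_1},k_T$ (with $k_T>0$), and the critical value $s^*(k_{t_1},k_T)$. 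It never uses the specific relations $k_{t_1}=K-D_1$, $k_T=K$. Hence it yields $C^{\name{RGW}}(s,k_{t_1},k_T)$, with $b_i$ built from $s^*$ and the bivariate normal terms coming from the second expectation once $c^*$ is rewritten as a conditional expectation of $(S'_T-k_T)^+$.

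The main obstacle is justifying that the classical RGW derivation is valid for the generic parameters, in particular that nothing relied on $k_T=K$ or on $k_{t_1}<k_T$. I would handle this by briefly re-deriving the two pieces. The first piece gives $sN(b_1)-k_{t_1}e^{-rt_1}N(b_2)$ by the standard lognormal partial-expectation formulas. For the second piece, write $c^*(S'_{t_1},\dots)=e^{-r(T-t_1)}\wh\EE[(S'_T-k_T)^+\mid\FFF_{t_1}]$. This reduces it to $e^{-rT}\wh\EE[(S'_T-k_T)\mathbf 1_{\{S'_T>k_T,\,S'_{t_1}\le s^*\}}]$. Evaluating this under the stock and money-market measures gives the bivariate normal terms with correlation $\sqrt{t_1/T}$, with signs arranged so that $sN(b_1)+sM(a_1,-b_1,-\sqrt{t_1/T})$ appears after combining with the first piece. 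Only $k_T>0$, $s^*>0$ and $k_{t_1}\in\RR$ enter, so the result holds.

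Finally, the optimal exercise policy statement follows directly from Proposition \ref{exresult}, which was already proved with the general quantities \eqref{this}.
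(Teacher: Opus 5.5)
Your proposal is correct and follows the paper's route: reduce to the generic GBM identity \eqref{genrgw}, then substitute $s=\overline{S}_0$, $k_{t_1}=K_{t_1}$, $k_T=K_T$, and read the exercise policy off Proposition \ref{exresult}. The only difference is how the $k_T>0$ cases are handled: the paper obtains them by relabelling the classical result \eqref{ggbasic} with $K=k_T$ and $D_1=k_T-k_{t_1}$, and checks only $k_T\le 0$ by hand, whereas you verify every case directly. Under that relabelling, $l<d<u$ is exactly the classical regime $K(1-e^{-r(T-t_1)})<D_1<K$, so the obstacle you flag (possible dependence on $k_T=K$ or $k_{t_1}<k_T$) does not arise and no re-derivation is needed.
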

        
        \begin{proof} Writing the RGW formula as a generic identity, we have for any $r>0$, $\sigma>0$, $k_{t_1}\in\RR$,  $k_T\in\RR$, $0<t_1<T$, if $S'\sim \name{GBM}(r,\sigma)$ with initial value $S'_0>0$, then
            \begin{align}
                e^{-rt_1} \EE[({S}'_{t_1}- k_{t_1})^+ \vee c^*({S}'_{t_1},k_{T},t_1,T)] = C^*(S'_0,k_{t_1},k_{T}). \label{genrgw}
            \end{align}
            To see this,  note that \eqref{rnam} and \eqref{gg} only  imply that \eqref{genrgw} holds for $k_T>0$. We separately have for $k_T\leq0$ that Lemma \ref{philem} (ii), (iii) imply
            \begin{align*}
                (s- k_{t_1})^+ \vee c^*(s,k_{T},t_1,T)  = \begin{cases}
                    s-k_{t_1}  & \text{if  $d > l$} , \\
                    s-k_Te^{-r(T-t_1)} & \text{if  $d \leq  l$} , \\
                \end{cases}
            \end{align*}
            which gives
            \begin{align*}
                e^{-rt_1} \EE[({S}'_{t_1}- k_{t_1})^+ \vee c^*({S}'_{t_1},k_{T},t_1,T)] = \begin{cases}
                    S'_0-k_{t_1}e^{-rt_1}  & \text{if  $d > l$} , \\
                    S'_0-k_Te^{-rT}    & \text{if  $d \leq  l$} , \\
                \end{cases}
            \end{align*}
            and thus  \eqref{genrgw} holds for $k_T\in\RR$.
            
            Finally, the American call price is given by  \eqref{rnam}, where the expectation is evaluated by applying  \eqref{genrgw} under $\wh\PP$  with the  substitutions $S'=\overline{S} \sim\name{GBM}(r,\sigma)$, $k_{t_1} = K_{t_1}$ and $k_{T} = K_T$, which produces \eqref{extrgw}.
        \end{proof}

        Next, we can express the optimal exercise policy primarily in terms of the signs of the modified strike prices $K_T$ and $K_{t_1}$.  
        \begin{corollary}\label{allcasescor}
            In the setting of Proposition \ref{exresult}, the following optimal exercise policy holds:
            \begin{enumerate}
                 \item[(i)]  If   $K_T>0$ and $K_{t_1}> 0$, then (NM) holds when $D\leq L$ and (M) holds when  $D> L$.
                \item[(ii)] If   $K_T> 0$ and $K_{t_1}\leq 0$, then (A) holds.
                \item[(iii)] If $K_T\leq 0$ and $K_{t_1}<0$, then  (NA) holds when $D\leq L$ and (A) holds when $D>L$.
                \item[(iv)] If $K_T\leq 0$ and $K_{t_1}\geq 0$, then (NA) holds.
                     \end{enumerate}
        \end{corollary}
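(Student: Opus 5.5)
The plan is to translate the conditions of Proposition \ref{exresult} into conditions on the signs of $K_T$ and $K_{t_1}$, using the definitions in \eqref{this}. The key observations are the identities
\begin{align*}
    U - D = K_T - (K_T - K_{t_1}) = K_{t_1}, \qquad D - L = K_T e^{-r(T-t_1)} - K_{t_1}.
\end{align*}
The first gives $D<U \Leftrightarrow K_{t_1}>0$ and $D\geq U\Leftrightarrow K_{t_1}\leq 0$. The second shows that $D\leq L$ is equivalent to $K_{t_1}\geq K_Te^{-r(T-t_1)}$. Once these are in place, each case reduces to checking which of the partitions in the proof of Proposition \ref{exresult}(ii) applies.

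\emph{Case $K_T>0$.} Since $r>0$ and $T>t_1$, we have $0<L<U$.

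For (i), if $K_{t_1}>0$ then $D<U$. Either $D\leq L$, in which case (N) holds and, because $K_T>0$, this is (NM). Or $L<D<U$, which is (M).

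For (ii), if $K_{t_1}\leq 0$ then $D\geq U$, so (A) holds by Proposition \ref{exresult}(i).

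\emph{Case $K_T\leq 0$.} By Proposition \ref{exresult}, the case (M) is impossible, because $L<D<U$ forces $K_T>0$. So only (NA) when $D\leq L$, or (A) when $D>L$, can occur.

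For (iii), this is exactly the stated dichotomy, with (N) specialising to (NA) since $K_T\leq 0$. The hypothesis $K_{t_1}<0$ just records the region left after (iv).

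For (iv), I need that $K_{t_1}\geq 0$ forces $D\leq L$. From the second identity, $D\leq L \Leftrightarrow K_{t_1}\geq K_Te^{-r(T-t_1)}$. Since $K_T\leq 0$, the right-hand side is $\leq 0\leq K_{t_1}$, so $D\leq L$ holds and (NA) follows.

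The main obstacle is only bookkeeping: verifying the algebraic identity for $D-L$ and keeping the boundary cases consistent with the indifference convention. For instance, when $K_T\leq 0$ and $D=L$ there is indifference, which is classified as not exercising, so it belongs to (NA) and the conventions agree. No analytic input is needed beyond Proposition \ref{exresult}.
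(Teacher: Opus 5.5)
Your proof is correct and follows essentially the same route as the paper: both take the partition from the proof of Proposition \ref{exresult}(ii) and read off the sign of $K_{t_1}$ on each piece. Your identities $U-D=K_{t_1}$ and $D-L=K_Te^{-r(T-t_1)}-K_{t_1}$ simply spell out the implications the paper states without computation, for example that $D>L$ forces $K_{t_1}<0$ when $K_T\leq 0$.
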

        
        \begin{proof}         \emph{(i)--(ii).} First consider the case  $K_T>0$. Note the partition in the proof of Proposition \ref{exresult}, $D\leq L$, $L < D< U$, $D \geq U$, which corresponds to (NM), (M), (A), respectively.  We have that $D\leq L$ implies $K_{t_1}>0$;   $L < D < U$ implies $K_{t_1}> 0$; and $ D\geq  U$  implies $K_{t_1} \leq 0$. Summarising these results gives (i) and (ii).

            \emph{(iii)--(iv).}  Next consider the case $K_T\leq 0 $. We similarly have the partition $D\leq   L$ and $D >L$, which corresponds to (NA) and (A), respectively. We have that $D\leq L$ can imply $K_{t_1}< 0$ or $K_{t_1}\geq 0$; and  $D > L$ implies $K_{t_1}<0$. Summarising these results gives (iii) and (iv).
        \end{proof}
        
        Corollary \ref{allcasescor} also allows us to write out the extended  RGW formula in the various cases more explicitly.

        \begin{corollary}\label{cor2}
            In the setting of Theorem \ref{theorem2},   the American call price at time 0 is
                      \begin{align*}
                C_0 = \begin{cases}
                    (S_0 -\overline{D}_{0} ) N(d_1) -( K-\overline{D}_{T})e^{-rT}N(d_2) & \text{if $K_T >0, D\leq L$,} \\
                    C^{\name{RGW}}( \overline{S}_0,K_{t_1}, K_{T}) & \text{if $K_T >0, L<D<U$,} \\
                    S_0  -\overline{D}_{0,{T}\given 0} -(K-D_1)e^{-rt_1}&  \begin{array}{@{}l@{}}
                        \text{if $K_T >0, D\geq U$, }\\
                        \text{or $K_T \leq 0, D>L$,}
                    \end{array} \\
                    S_0 -\overline{D}_{0,T\given 0} -Ke^{-rT} & \text{if  $K_T \leq 0, D\leq L$,}
                \end{cases}
            \end{align*}
            where $d_1$, $d_2$ are given in \eqref{d1d2},
            and if $n=1$, then the price reduces to \eqref{ggbasic}.
        \end{corollary}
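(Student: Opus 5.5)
The plan is to start from Theorem \ref{theorem2}, which gives $C_0 = C^*(\overline{S}_0,K_{t_1},K_T)$ with $\overline{S}_0=S_0-\overline{D}_0$. Then I will evaluate the piecewise definition \eqref{gg} of $C^*$ at $s=\overline{S}_0$, $k_{t_1}=K_{t_1}$, $k_T=K_T$, case by case. Under these substitutions, $d$, $l$ and $u$ become $D$, $L$ and $U$ from \eqref{this}. The case split is taken from Proposition \ref{exresult}(ii), which says the conditions partition all parameter values. For $K_T>0$ the cases are $D\le L$, $L<D<U$ and $D\ge U$. For $K_T\le 0$ they are $D\le L$ and $D>L$.

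\emph{Middle and exercise cases.} The case $K_T>0$, $L<D<U$ is immediate, since it reads off $C^{\name{RGW}}(\overline{S}_0,K_{t_1},K_T)$. For the always-exercise cases, \eqref{gg} gives $\overline{S}_0-K_{t_1}e^{-rt_1}$. I expand this using \eqref{divrel}:
\begin{align*}
\overline{S}_0-K_{t_1}e^{-rt_1} &= S_0-\overline{D}_0-Ke^{-rt_1}+D_1e^{-rt_1}+\overline{D}_Te^{-rT}.
\end{align*}
Then I use the decomposition $\overline{D}_0=\overline{D}_{0,T\given 0}+\overline{D}_Te^{-rT}$, which holds because the dividends after $T$ discounted to time $0$ equal $\overline{D}_T e^{-rT}$. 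This yields $S_0-\overline{D}_{0,T\given 0}-(K-D_1)e^{-rt_1}$. The same decomposition will be reused below.

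\emph{Case $D\le L$.} Here \eqref{gg} gives $c^*(\overline{S}_0,K_T,0,T)$. If $K_T>0$, this is $c^{\name{BS}}$, and writing it out gives exactly the first line, with $d_1,d_2$ as in \eqref{d1d2}, matching Corollary \ref{cor}. If $K_T\le0$, it is $\overline{S}_0-K_Te^{-rT}=S_0-\overline{D}_0-Ke^{-rT}+\overline{D}_Te^{-rT}$. The decomposition collapses this to $S_0-\overline{D}_{0,T\given 0}-Ke^{-rT}$, which is the second case of Corollary \ref{cor}.

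\emph{Reduction to $n=1$.} When $n=1$ we have $\overline{D}_T=0$, so $K_T=K>0$, $K_{t_1}=K-D_1$, and \eqref{notthis} holds. The cases with $K_T\le 0$ are then vacuous. The remaining three lines are exactly $C^*(\overline{S}_0,K_{t_1},K_T)$ from \eqref{gg}, which is \eqref{ggbasic}.

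The only step needing care is the bookkeeping identity $\overline{D}_0=\overline{D}_{0,T\given0}+\overline{D}_Te^{-rT}$, together with \eqref{divrel}. Beyond that, the task is just making sure the case labels line up with \eqref{gg}.
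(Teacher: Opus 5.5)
Your proposal is correct and is essentially the paper's argument: take Theorem \ref{theorem2}, substitute $s=\overline{S}_0$, $k_{t_1}=K_{t_1}$, $k_T=K_T$ into \eqref{gg}, split the $d\le l$ branch by the sign of $K_T$ via \eqref{exp}, and simplify the forward-type lines using \eqref{divrel} and $\overline{D}_0=\overline{D}_{0,T\given 0}+\overline{D}_Te^{-rT}$, exactly as in Corollary \ref{cor}. Your bookkeeping checks out, and the always-exercise line in fact collapses to $S_0-Ke^{-rt_1}$, which matches \eqref{finalcase} and the identical (A) prices in Table \ref{table}.
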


        It is clear that $C_0$ is well-defined and finite. For instance,  $C^{\name{RGW}}(\overline{S}_0,K_{t_1},K_{T})$ is only defined for $K_{T},K_{t_{1}}>0$, but this is ensured by the condition $L< D<U$.     Also, despite assuming $K>0$ throughout, clearly the pricing formulas \eqref{main} and \eqref{extrgw} and the optimal exercise policy in Proposition \ref{exresult} are still valid for $K\leq 0$,  though the upper bounds in Proposition \ref{narel} (iv)--(v) are not.

    \begin{example}
        Consider the situation in Example \ref{eg1}  but with $D_1,D_2$  given in Table \ref{table} and for American calls  with  $K=80$, $T=1$ instead. 
        Observing that Corollary \ref{allcasescor} breaks down into 6 cases when counting the 2 cases in (i) and (iii),  Table \ref{table} shows that all cases are nonempty and gives the corresponding optimal exercise policy and American call price by  \eqref{extrgw}.  In the (M) case, $S^*=98.4489$.  \\ \phantom{.}\halmos

        \begin{table}[htbp]
            \centering
            \centerline{
                \begin{tabular}{
                        c
                        c
                        S[table-format=-1.2]
                        S[table-format=-1.2]
                        S[table-format=-1.2]
                        S[table-format=-1.2]
                        S[table-format=-1.2]    
                        c
                        S[table-format=2.5]
                    }
                    \hline
                    $D_1$ & $D_2$ & {$K_{t_1}$} & {$K_T$} & {$L$} & {$D$} & {$U$} & {Exercise Policy} & {$C_0$} \\
                    \hline
                    2 & 5  & 73.29  & 75.15  & 2.22   & 1.86   & 75.15  & (NM) & 25.0294 \\
                    3 & 5  & 72.29  & 75.15  & 2.22   & 2.86   & 75.15  & (M)  & 24.3629 \\
                    3 & 82 & -0.22  & 0.42   & 0.01   & 0.65   & 0.42   & (A)  & 22.3644 \\
                    3 & 83 & -1.17  & -0.55  & -0.02  & 0.62   & -0.55  & (A)  & 22.3644 \\
                    2 & 83 & -0.17  & -0.55  & -0.02  & -0.38  & -0.55  & (NA) & 22.7179 \\
                    1 & 83 & 0.83   & -0.55  & -0.02  & -1.38  & -0.55  & (NA) & 23.6884 \\
                    \hline
            \end{tabular}}
            \caption{For different dividend amounts $D_1$, $D_2$ corresponding to the cases in Corollary \ref{allcasescor}, the optimal exercise policy  and the American call price $C_0$ are given.}
            \label{table}
        \end{table}
        
    \end{example}

    \begin{example}\label{exrgw1}
              
        Consider the situation in Example \ref{eg1}  but for an American call with $K=80$, $T=1$ instead. 
        Then this is the case (M) if $D_2<e^{r(t_2-t_1)}(K-D_1)=81.7614$ and (A) if $D_2\geq 81.7614$. In Figure \ref{fig2},  this threshold is shown by the black vertical line, and these cases correspond to $K_{t_1}>0$ and $K_{t_1}\leq 0$, respectively.  Figure \ref{fig2} shows the price of the American call $C_0$  computed  by the correct formula \eqref{extrgw} and by incorrectly  using  \eqref{rgw}, where only the dividends before maturity are included. The corresponding  implied volatility is also shown, computed using the correct formula.

        \begin{figure}[H]
            \begin{center}
                \includegraphics{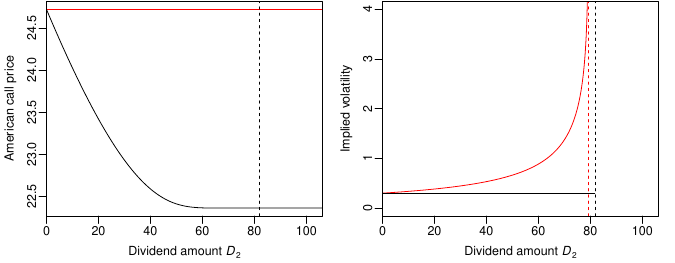}
                \caption{The American call price $C_0$ (left) and implied volatility (right) as a function of the dividend amount $D_2$ computed by the correct formula \eqref{extrgw} (black) and   incorrect formula \eqref{rgw} (red).}\label{fig2}
                
            \end{center}
        \end{figure}
        
        Note that for $D_2\geq 79.1840$, no implied volatility exists for the incorrect price, and this threshold is shown by the red vertical line.   For $D_2\geq  81.7614$, the option price is independent of $\sigma$, so the implied volatility does not uniquely exist even for the correct price. However, for $D_2$ less than but close to $81.7614$, the implied volatility for the correct price is of course $\sigma$, but  cannot be computed reliably due to numerical instability. \halmos

           \end{example}

    \begin{example}
          Consider the situation in Example \ref{eg1}  but with $D_2=5$  and for American calls  with  $K=80$, $T=1$ instead. 
        We construct an $N$-step approximating binomial tree for the stock price $S=\overline{S}+\overline{D}$ up to time $T$ using the method of \cite[Section 23.1]{hull}, in which the usual binomial  tree is constructed  for $\overline{S}$ and then $\overline{D}$ is added to each node by \eqref{pp}.  Figure \ref{fig3} shows the call prices computed using the $N$-step tree converging to the extended RGW formula \eqref{extrgw} as $N\to\infty$. The latter is the same as the (M) case in Table \ref{table}. This numerically demonstrates that the pricing formula with dividends after maturity is  correct in this example. Other American and exotic options can be priced consistently on this tree or its extension up to time $\overline{T}$.  \halmos

        \begin{figure}[htpb]
            \begin{center}
                \includegraphics{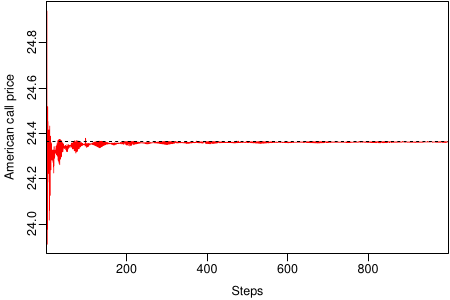}
                \caption{The  call price $C_0$ computed using an $N$-step binomial tree as a function of $N=1,2,\dots, 1000$ (red)  and by the extended RGW formula  \eqref{extrgw} (black).}\label{fig3}
            \end{center}
        \end{figure}
    \end{example}

    \subsection{Claims of arbitrage and related inconsistencies} \label{secarb}
    In \cite[Section 9.4]{Hau07}, it is claimed that the RGW formula \eqref{rgw} has ``considerable flaws that lead to significant arbitrage opportunities, and thus renders it more or less useless for all practical purposes." The details in \cite{Hau07,HauHauLew03} are as follows.
    
    \begin{example}\label{abrex}
        Consider the situation in Example \ref{eg1}  but with $D_1=7$, $t_1=0.9999$, $D_2=0$ instead. All strike prices are $K=130$.
        An American call with maturity  $T=1$ just after the dividend time has price 4.3007 using the RGW formula \eqref{rgw}, while the European call with maturity  $T = 0.9998$ just before the dividend time has price  4.9183 using \eqref{bs}. This is claimed to be an arbitrage, as the American call, which has more exercise rights than the European call, should not be worth less than it. However, while there are no dividends after maturity for the American call,  there is one for the European call, so the  European call price is actually 4.3001 by \eqref{main}, and so this is not an arbitrage. \\ \phantom{.} \halmos
    \end{example}

    A similar arbitrage claim about the RGW formula is made in \cite[pg 116]{Fri02} with a different numerical example where the  3 call prices above are 88,173.32, 96,348.77 and  88,173.32, respectively. In pricing these, we have interpreted just prior and just after the dividend time as the limit, and the second price slightly differs from the printed value 96,197 as \cite{Fri02} used a numerical method which, neglecting discretisation error, should be  the same as our use of \eqref{bs}. This is not an arbitrage for the same reason. Similar incorrect arbitrage claims are made in \cite[Sections 1--2]{BenVor02} and \cite[pg 109--110]{BosGaiShe03}. Overall, such arguments about inconsistencies in the escrowed dividend model and disfavouring its use arise from applying pricing formulas and methods that are from inconsistent models. For example, \eqref{bs} comes from a  Black-Scholes escrowed dividend model that holds up to time $T$, leading to different models over different maturities as discussed in Remark \ref{rem1}.  But by first fixing the Black-Scholes escrowed dividend model for a   time horizon $\overline{T}$ and then pricing options for any maturity $T\in(0,\overline{T}]$ within this model, such as by using \eqref{main}, \eqref{extrgw}, or Monte Carlo methods for exotic options, the issue is resolved.  Moreover, the model necessarily has no arbitrage due to the existence of a risk-neutral measure.

    \begin{remark}\label{ogrem2} Both Example \ref{abrex} and the above point essentially appear in \cite[Example 2]{BerMai15} and the preceding text. In their case, all 3 calls are American and they use a tree-based method for pricing, whereas we are able to use pricing formulas instead, despite them being American.  Similar points about the above resolution are also made in \cite[pg 206--207]{BenVor02}, \cite[Section 4.2.5]{Kla15}, \cite[pg 266]{VelNi06}.  \halmos
    \end{remark}

    The issue of inconsistent models  similarly arises with Black's approximation. For simplicity, assume there is a single dividend before maturity and $n=1$, then Black's approximation is $C_0 \approx \max(c_0(t_1-),c_0(T))$ (this notation is defined in the proof of Proposition \ref{narel} (vi)), where the method describes that the 2  European call prices are computed using \eqref{bs} \cite{black75}. It has been noted that Black's approximation cannot overestimate $C_0$ since the American call can in actuality be exercised at more than just these 2 times (see \cite[Equation (5.1.31)]{Kwo08}, \cite[pg 211]{Rol77}). But it has also been noted that Black's approximation can overestimate $C_0$. We observe  here that this is only possible because the incorrect pricing formula \eqref{bs} is used, but not when the correct formula \eqref{main} is used, as expected. In \cite[Exercise 15.28]{hull} and its answer, the overestimation is mentioned, but not this explanation for it.

    \section{Conclusion} \label{sec4}

    Within  the escrowed dividend model, we have derived standard no arbitrage results and extensions of the Black-Scholes and RGW formulas when there are dividends after maturity, including  fully characterising the optimal exercise policy for the latter. This package of results follows from a single coherent fixed model that does not admit arbitrage, contrary to claims otherwise.
    
    We hope this improves both the practice and teaching of option pricing. For practice, we have derived analytical pricing formulas in a consistent way across different maturities, and demonstrated that  failing to account for dividends after maturity can lead to reasonably large errors in some cases, but small errors in many cases. For teaching, which was our original motivation, we have provided a  framework that is arguably the most natural and simple extension of the Black-Scholes model to the discrete dividend setting. Within this model, results such as put-call parity and the Black-Scholes formula for discrete dividends, in the form they are presented in  standard textbooks, are provable and consistent, with extensions to dividends after maturity, thereby improving pedagogy. This contrasts with the patchwork of methods and models in the literature, including the piecewise GBM model, in which standard no arbitrage results take a different form and there are no known analytical pricing formulas. For these reasons, we advocate the use of the escrowed dividend model.
    
    Lastly, the change of variable technique here should be applicable to extend pricing formulas in non-Black-Scholes   escrowed dividend models to the case where there are dividends after maturity.

    \bibliographystyle{alpha}  
    \bibliography{bibliography}  
    
\end{document}